\newtheorem{theorem}{Theorem}[section]
\newtheorem{lemma}[theorem]{Lemma}
\newcommand{\ms}[1]{{\mathsf{#1}}}
\newcommand{\mb}[1]{{\mathbb{#1}}}
\newcommand{\od}[2]{{#1}\;\measuredrightangle\;{#2}}
\title{Zig-zagging in a Triangulation}
\author{Wouter Kuijper \and Victor Ermolaev \and Harm Berntsen}
\begin{document}

\maketitle

\section{Introduction}
We present an oblivious walk for point-location in 2-dimensional
triangulations and a corresponding, strictly monotonically decreasing
distance measure.

\section{Problem Description}

Let $T = (V, F, E)$ be a 2-dimensional triangular tiling of $\mb{R}^2$
in quad-edge representation~\cite{guibas1985primitives}, consisting of
a set of vertices $V$, a set of faces $F$ and a set of half-edges
$E$. We say $T$ is \emph{locally finite} iff for any circle in
$\mb{R}^2$ it holds that the set consisting of all edges (vertices,
faces) in $T$ that intersect the circle (circumference or interior) is
finite.

Let $e \in E$ be some half-edge of this mesh starting at vertex $e_1$
and ending at vertex $e_2$. With $\ms{inv}(e)$ we denote the twin of
$e$ such that $\ms{inv}(e)_1 = e_2$ and $\ms{inv}(e)_2 = e_1$. With
$\ms{face}(e)$ we denote the face $f \in F$ that is, by convention, to
the left of the half-edge $e$. With $\ms{next}(e)$ we denote the next
half-edge (in the face winding order) that is also a side of
$\ms{face}(e)$, i.e.: $\ms{face}(\ms{next}(e)) = \ms{face}(e)$. With
$\ms{prev}(e)$ we denote the previous half-edge (in the face
winding order) that is also a side of $\ms{face}(e)$, i.e.:
$\ms{face}(\ms{prev}(e)) = \ms{face}(e)$. Because we are dealing with
a triangular mesh, in particular, it holds: $\ms{next}(\ms{next}(e)) =
\ms{prev}(e)$ and vice versa.

The problem of point-location~\cite{devillers2002walking} that we
consider in this note can be formulated as follows: Given a point $p
\in \mb{R}^2$ and an initial half-edge $e_\ms{init} \in E$ find some
half-edge $e_\ms{goal} \in E$, using only $\ms{next}(\cdot)$,
$\ms{prev}(\cdot)$ and $\ms{inv}(\cdot)$ operations, such that $p \in
\ms{face}(e_\ms{goal})$, i.e.: the point is on the face. We,
additionally, require the algorithm to be \emph{oblivious}, meaning
that the next chosen half-edge only depends on the last-visited
half-edge and the goal location.

\section{Example of a Zig-Zagging Walk}

A walk in a triangulation is a sequence of faces. We concern ourselves
here only with \emph{directed} walks, such walks attempt to close the
distance between some starting edge (or its corresponding face) and a
target point. Directed walks are important in practice as they are
frequently used to navigate meshes as efficient, flexible
space-partitioning data-structures.

One possible application of a directed walk would be to translate
mouse-clicks to selected faces in a CAD-application. Because such type
of interaction exhibits a lot of locality ---i.e.: clicks often happen
in close proximity to each other--- it is much more efficient to walk
to the corresponding face from the last-visited one then to perform a,
potentially exhaustive, sweep of the entire mesh, which, in
representative cases, may measure in the order of millions or even
billions of faces.

Before we turn to the exposition of the zig-zagging walk, its
corresponding distance measure and termination proof, let us first
investigate briefly why a zig-zagging motion is a useful and intuitive
analogy in thinking about point-location in two dimensional
triangulations.

Since an oblivious walk ``forgets'' its original point of departure as
well as the route followed to get to the current face, it suffices to
consider only the current half-edge and the corresponding current
face (which, by our convention, is always to the left of the current
half-edge) when thinking about the point-location problem.

\begin{figure}\center
  \begin{tikzpicture}
    \node (e1) at (-5,-3) [inner sep=0] {$\circ$};
    \node (e1l) at (e1) [anchor=north] {$e_1$};
    \node (e2) at (5,-3) [inner sep=0] {$\circ$};
    \node (e2l) at (e2) [anchor=north] {$e_2$};
    \node (c) at (0,0) [inner sep=0] {$\circ$};
    \node (t1) at (-5,3) [inner sep=0] {$\circ$};
    \node (t2) at (5,3) [inner sep=0] {$\circ$};
    \node (f) at (0,-2) [inner sep=1mm] {$f$};
    \node (fl) at (-4,0) [inner sep=1mm] {$f_l$};
    \node (fr) at (4,0) [inner sep=1mm] {$f_r$};
    \node (p) at (0,2) [inner sep=0] {$\circ$};
    \node (pl) at (p) [anchor=south,yshift=0mm] {$p$};
    \path [draw] (e1) -- node [above] (e) {$e$} (e2);
    \path [draw] (e1) -- (c);
    \path [draw] (e2) -- (c);
    \path [draw] (e1) -- (t1);
    \path [draw] (e2) -- (t2);
    \path [draw] (c) -- (t1);
    \path [draw] (c) -- (t2);
    \path [draw] (t1) -- (t2);
    \path [draw,dashed,->] (f) -- (fl);
    \path [draw,dashed,->] (f) -- (fr);
    \path [draw,dashed,->] (fl) -- (p);
    \path [draw,dashed,->] (fr) -- (p);
  \end{tikzpicture}
  \caption{An example of a zig-zagging motion exhibited during a
    walk.}\label{fig:zig-zag}
\end{figure}
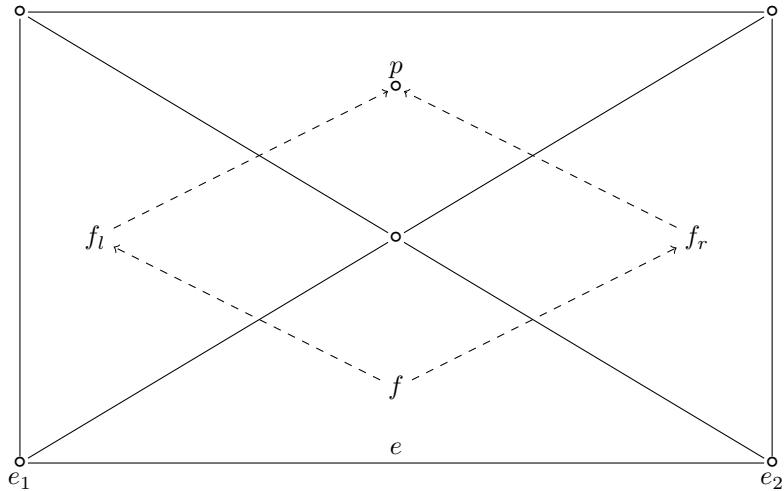

As an example consider Figure~\ref{fig:zig-zag}. Assume that at some
point during the walk we arrive at half-edge $e$ and corresponding
face $f = \ms{face}(e)$. The target point $p$ is ``straight ahead'' in
the direction of the tip of $f$. However, going straight is not an
option, because the only faces that are directly linked to $f$ (in the
quad-edge representation) are $f_l =
\ms{face}(\ms{inv}(\ms{prev}(e)))$ and $f_r =
\ms{face}(\ms{inv}(\ms{next}(e)))$. Since ``turning back'' is not an
option either, any route to the target point $p$ will have to go
either over the left or over the right, hence, any route will make a
``zig-zag'' motion to reach the target.

The previous example may seem particularly inefficient in terms of
Euclidian distance traveled, however note that the complexity of the
walk does not, in any way, depend on the Euclidian distance, rather it
depends on the \emph{link distance} which, in our case, could be
defined as the number of atomic quad-edge relations ($\ms{next}$,
$\ms{prev}$ and $\ms{inv}$) that had to be traversed.

One problem that we quickly run into when developing a
walking-algorithm is the fact that Euclidian distance is not a
sufficient measure to prove termination: steps do not always strictly
decrease the Euclidian distance to the target. Therefore, to obtain a
clear proof of termination, we need an alternative distance measure
that takes into account the fact that a step in the walk can make
progress not only in terms of Euclidian distance to the target but
also in terms of \emph{orientation towards the target}.

\section{Distance Measure}
%
%In order to have a terminating algorithm we need some distance measure
%to be monotonically decreasing with each step of the algorithm.
For some half-edge $e$ and point $p$ such that $e$ does not intersect
$p$ we define the \emph{oriented distance} of $e$ to $p$ denoted
$\od{e}{p}$ as a pair $[d, \alpha]$ where $d$ is defined to be the
Euclidian distance of point $p$ to the closest point $e_p$ on the
(finite!) line segment $e_1;e_2$ corresponding to half-edge $e$ and
$\alpha$ is defined to be the (smallest) angle between $e_1;e_2$ and
the ray $p;e_p$ or zero in case $p = e_p$.
Since the shortest ray from $p$ to $e$ ends at either $e_1$, $e_2$ or
the orthogonal projection of $p$ on $e_1;e_2$ it follows $\alpha \ge
\pi/2$.
For some examples of how this oriented distance measure is defined
under various relative configurations of the current edge and the
target point, see Figure~\ref{fig:dist1}.

\begin{figure}\center
  \begin{tikzpicture}[scale=.9]
  	\node (e1) at (-2, 0) [inner sep=0cm] {$\circ$};
	\node (e2) at (2, 0) [inner sep=0cm] {$\circ$};
	\path [draw] (e1) -- (e2); %node [below] {$e$} (e2);
	\node (e1label) [anchor=north] at (e1) {$e_1$};
  	\node (e2label) [anchor=north] at (e2) {$e_2$};
	
	\node (p1) at (-1,3) [inner sep=0cm] {$\circ$};
  	\node (ep) at (-1,0) [inner sep=0cm] {$\circ$};
	\path [draw, dashed,<->] (p1) -- node [right] {$d_1$} (ep); 	
  	\node (p1label) [anchor=south] at (p1) {$p_1$};
  	\node (ep1label) [anchor=north] at (ep) {$e_p$};
  	
  	\path [draw] (-.6,0) arc (0:90:.4cm);
  	\node (alpha1) at (-.5,.5) {$\alpha_1$};
  	
  	\node (p2) at (-6, 3) [inner sep=0cm] {$\circ$};
	\path [draw, dashed,<->] (p2) -- node [right] {$d_2$} (e1); 	
  	\node (p2label) [anchor=south] at (p2) {$p_2$};  	

  	\path [draw] (-1.7, 0) arc (0:135:.4cm);
  	\node (alpha2) at (-1.6,.5) {$\alpha_2$};
  	
  	\node (p3) at (6, 3) [inner sep=0cm] {$\circ$};
	\path [draw, dashed,<->] (p3) -- node [right] {$d_3$} (e2); 	
  	\node (p3label) [anchor=south] at (p3) {$p_3$};  
  	
  	\path [draw] (1.7, 0) arc (180:45:.4cm);
  	\node (alpha3) at (1.6,.5) {$\alpha_3$};	
  \end{tikzpicture}
  \caption{Examples of point--half-edge distance for various locations of a point.}
  \label{fig:dist1}
\end{figure}
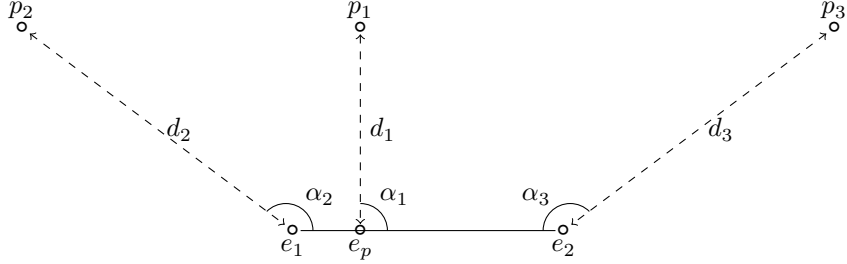

We now define a strict partial ordering $<$ on oriented distances such
that
$$[d, \alpha] < [d', \alpha']\text{ iff }d < d' \lor (d = d' \land
\alpha < \alpha')$$
In order to see how this distance measure allows us to compare various
candidate successor edges consider the example in
Figure~\ref{fig:areas}. Here we show a classification of the points in
the space that is to the left of the line supporting the current
half-edge $e$ (but not \emph{on} the current face). So let $p \in
\mb{R}^2$ be the target point and let $l = \ms{inv}(\ms{next}(e))$ and
$r = \ms{inv}(\ms{prev}(e))$ be the two possible successor half-edges,
now let $[d_l, \alpha_l] = \od{l}{p}$ and $[d_r, \alpha_r] =
\od{r}{p}$, we classify the space where the target point $p$ may
reside as follows:
\begin{itemize}
\item $d_l < d_r$ and the point of $l$ closest to $p$ is $l_1$ (I$_l$)
\item $d_l < d_r$ and the point of $l$ closest to $p$ is strictly
  in-between $l_1$ and $l_2$ (II$_l$)
\item $d_l = d_r$ and $\alpha_l < \alpha_r$ (III$_l$)
\item $d_l = d_r$ and $\alpha_l = \alpha_r$ (IV)
\item $d_l > d_r$ and $\alpha_l > \alpha_r$ (III$_r$)
\item $d_l > d_r$ and the point of $r$ closest to $p$ is strictly in-between $r_1$ and $r_2$ (II$_r$)
\item $d_l > d_r$ and the point of $r$ closest to $p$ is $r_2$ (I$_r$)
\end{itemize}
Note that area I$_l$ or I$_r$ becomes a line in case the triangle is
right in its left or right base angle respectively and empty in case
the corresponding angle becomes obtuse. All the other areas remain
non-empty no matter how obtuse or how thin the triangle becomes.

\begin{figure}\center
  \begin{tikzpicture}[scale=1]
    \node (e1) [inner sep=0cm] at (-2, 0) {$\circ$};
    \node (e2) [inner sep=0cm] at (2, 0) {$\circ$};
    \coordinate (ee1) at (-5, 0);
    \coordinate (ee2) at (5, 0);
    \node (a) [inner sep=0cm] at (0,3) {$\circ$};
    \node (labelA) [anchor=north] at (e1) {$l_1 = e_1$};
    \node (labelB) [anchor=north] at (e2) {$e_2 = r_2$};
    \node (labelCl) [anchor=north east,xshift=-.3cm] at (a) {$l_2$};
    \node (labelCe) [anchor=north,yshift=-.1cm] at (a) {$=$};
    \node (labelCr) [anchor=north west,yshift=-.1cm,xshift=+.3cm] at (a) {$r_1$};
    \path[draw,dashed] (ee1) -- (e1);
    \path[draw] (e1) -- node [above] {$e$} (e2);
    \path[draw,dashed] (e2) -- (ee2);
    \path[draw] (e1) -- node [above,xshift=-.2cm] {$l$} (a);
    \path[draw] (e2) -- node [above,xshift=.2cm] {$r$} (a);
    \path[draw,dashed] (a) -- ++(3,2);
    \path[draw,dashed] (e2) -- ++(3,2);
    \path[draw,dashed] (a) -- ++(-3,2);
    \path[draw,dashed] (e1) -- ++(-3,2);
    \node (iv) at (0,5.5) {IV};
    \path [draw,dashed] (a) -- (iv);
    \path [draw,dashed] (iv) -- ++(0,1);
    %\path[draw] (a) ++(.3,.2) -- ++(.2,-.3) -- ++(-.3,-.2);
    %\path[draw] (a) ++(-.3,.2) -- ++(-.2,-.3) -- ++(.3,-.2);
    \node (i) at (4.5, .8) {I$_r$};
    \node (ip) at (-4.5, .8) {I$_l$};
    \node (ii) at (3, 3) {II$_r$};
    \node (iip) at (-3, 3) {II$_l$};
    \node (iii) at (-1, 5) {III$_l$};
    \node (iii) at (1, 5) {III$_r$};
  \end{tikzpicture}
  \caption{The various areas where the target point may reside.}\label{fig:areas}
\end{figure}
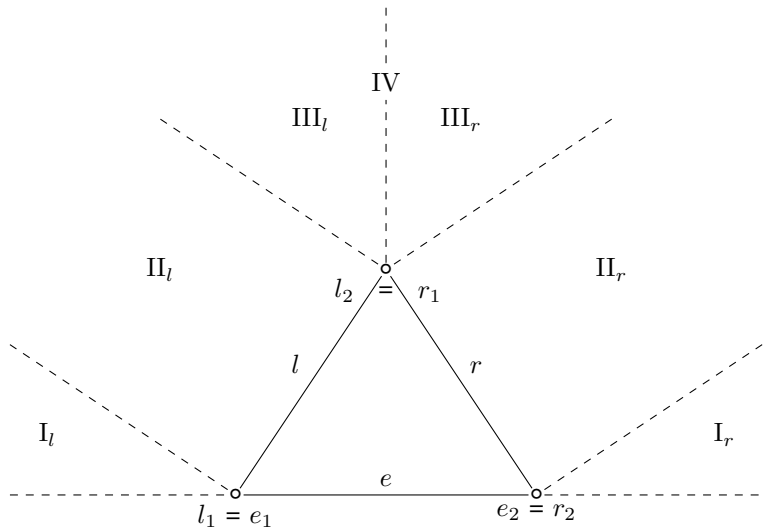

\section{Oriented Distance Walk}
We are now in a position to describe the algorithm properly. Let $e \in E$ be some initial half-edge,
and let $p \in \mb{R}^2$ be the goal location. The algorithm then goes as follows:

\lstdefinelanguage{alg}{}

\begin{lstlisting}[language=alg, mathescape ,numbers=left]
    $\text{\bf if}$ $p\ \text{is to the right--of}\ e$ $\text{\bf then}$ $e \leftarrow \ms{inv}(e)$
    $\text{\bf while}$ $p \notin \ms{face}(e)$
        $l, r \leftarrow \ms{inv}(\ms{next}(e)), \ms{inv}(\ms{prev}(e))$
        $\text{\bf if}$ $(\od{l}{p}) < (\od{r}{p})$ $\text{\bf then}$ $e \leftarrow l$ $\text{\bf else}$ $e \leftarrow r$
\end{lstlisting}

In line 1 we bootstrap the algorithm by ensuring the invariant that
$p$ is always to the left of $e$. In line 2 we check the termination
condition that $p$ is on the current face (i.e.: the face to the
left of the current half-edge). In line 3 we compute the two possible
successor edges. In line 4 we pick the successor edge that has the
smallest oriented distance to the target point.

Note that, in this algorithm, ties are broken by defaulting to the
right-successor edge. Clearly, it is possible to have a leftmost
version by defaulting to the left instead, or a non-deterministic
version by allowing either the left or the right-successor edge
whenever $\od{l}{p}$ is not strictly less or greater-than
$\od{r}{p}$. For the remainder we will assume the non-deterministic
version of the algorithm, all the results that follow will then hold
also for (both) deterministic versions, as a corollary.

We shall first prove a number of properties of the walk which
consequently will allow us to prove the main result, i.e.: guaranteed
termination of the zig-zagging walk. First we show that the point
will, as an invariant, always be to the left of the current half-edge.

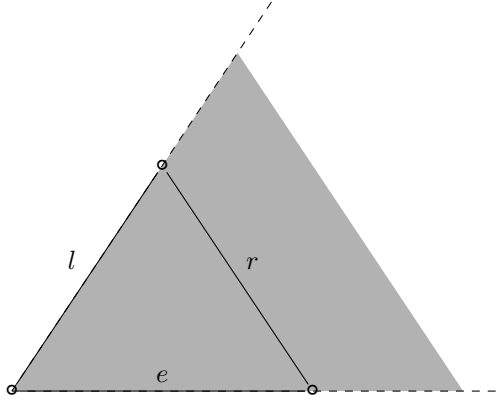
\begin{figure}\center
  \begin{tikzpicture}[scale=1]  
    \fill[black!30] (-2,0) -- (4,0) -- (1,4.5) -- (-2,0);
    \draw[dashed] (-2,0) -- (4.5,0);
    \draw[dashed] (-2,0) -- (1.5,5.25);
    \node (e1) [inner sep=0cm] at (-2, 0) {$\circ$};
    \node (e2) [inner sep=0cm] at (2, 0) {$\circ$};
    \node (tip) [inner sep=0cm] at (0,3) {$\circ$};
    \path[draw] (e1) -- node [above] {$e$} (e2);
    \path[draw] (e1) -- node [above, xshift=-.2cm] {$l$} (tip);
    \path[draw] (e2) -- node [above, xshift=.2cm] {$r$} (tip);
  \end{tikzpicture}
  \caption{Lemma~\ref{lem:halfspace}, step 1: $p$ lies somewhere in the gray cone}\label{fig:lemma:1}
\end{figure}

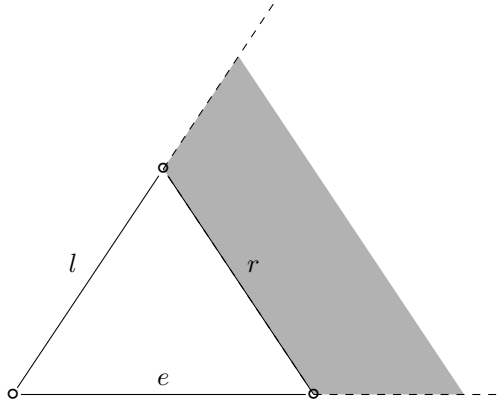
\begin{figure}\center
  \begin{tikzpicture}[scale=1]
    \fill[black!30] (2,0) -- (4,0) -- (1,4.5) -- (0,3) -- (2,0);
    \draw[dashed] (2,0) -- (4.5,0);
    \draw[dashed] (2,0) -- (0,3);
    \draw[dashed] (0,3) -- (1.5,5.25);
    \node (e1) [inner sep=0cm] at (-2, 0) {$\circ$};
    \node (e2) [inner sep=0cm] at (2, 0) {$\circ$};
    \node (tip) [inner sep=0cm] at (0,3) {$\circ$};
    \path[draw] (e1) -- node [above] {$e$} (e2);
    \path[draw] (e1) -- node [above, xshift=-.2cm] {$l$} (tip);
    \path[draw] (e2) -- node [above, xshift=.2cm] {$r$} (tip);
  \end{tikzpicture}
  \caption{Lemma~\ref{lem:halfspace}, step 2: $p$ lies somewhere in the gray, truncated cone}\label{fig:lemma:2}
\end{figure}

\begin{figure}\center
  \begin{tikzpicture}[scale=1]
    \fill[black!30] (0,3) -- (1,4.5) -- (2,4.33333) -- (0,3);
    \draw[dashed] (0,3) -- (1.5,5.25);
    \draw[dashed] (0,3) -- (3,5);
    \node (e1) [inner sep=0cm] at (-2, 0) {$\circ$};
    \node (e2) [inner sep=0cm] at (2, 0) {$\circ$};
    \node (tip) [inner sep=0cm] at (0,3) {$\circ$};
    \path[draw] (e1) -- node [above] {$e$} (e2);
    \path[draw] (e1) -- node [above, xshift=-.2cm] {$l$} (tip);
    \path[draw] (e2) -- node [above, xshift=.2cm] {$r$} (tip);
    \draw (-0.395,2.4) arc (56.6:123.4:-.72cm);
    \node (alr) at (0,2) {$\angle lr$};
  \end{tikzpicture}$\quad$
  \caption{Lemma~\ref{lem:halfspace}, step 3: $p$ lies somewhere in the gray cone}\label{fig:lemma:3}
\end{figure}

\begin{figure}\center
  \begin{tikzpicture}[scale=1]  
    \fill[black!30] (-2,0) -- (0,3) -- (2,0) -- (4,0) -- (0,4.5) -- (-4,0) -- (-2,0);
    \draw[dashed] (-2,0) -- (-4.5,0);
    \draw[dashed] (2,0) -- (4.5,0);
    \node (e1) [inner sep=0cm] at (-2, 0) {$\circ$};
    \node (e2) [inner sep=0cm] at (2, 0) {$\circ$};
    \node (tip) [inner sep=0cm] at (0,3) {$\circ$};
    \path[draw] (e1) -- node [above] {$e$} (e2);
    \path[draw] (e1) -- node [above, xshift=-.2cm] {$l$} (tip);
    \path[draw] (e2) -- node [above, xshift=.2cm] {$r$} (tip);
  \end{tikzpicture}
  \caption{Lemma~\ref{lem:mono}, step 1: $p$ lies somewhere in the gray zone}\label{fig:lemma:mono:1}
\end{figure}

\begin{figure}\center
  \begin{tikzpicture}[scale=1]  
    \fill[black!30] (-2,0) -- (0,3) -- (2,0) -- (2,4.5) -- (-2,4.5) -- (-2,0);
    \draw[dashed] (-2,0) -- (-2,5);
    \draw[dashed] (2,0) -- (2,5);
    \node (e1) [inner sep=0cm] at (-2, 0) {$\circ$};
    \node (e2) [inner sep=0cm] at (2, 0) {$\circ$};
    \node (tip) [inner sep=0cm] at (0,3) {$\circ$};
    \path[draw] (e1) -- node [above] {$e$} (e2);
    \path[draw] (e1) -- node [above, xshift=-.2cm] {$l$} (tip);
    \path[draw] (e2) -- node [above, xshift=.2cm] {$r$} (tip);
  \end{tikzpicture}
  \caption{Lemma~\ref{lem:mono}, step 2a: $p$ lies somewhere in the gray zone}\label{fig:lemma:mono:2}
\end{figure}

\begin{figure}\center
  \begin{tikzpicture}[scale=1]
    \fill[black!30] (-2,0) -- (-4.5,0) -- (-4,1.3333) -- (-2,0);
    \draw[dashed] (-2,0) -- (-5,2);
    \draw[dashed] (-2,0) -- (-5,0);
    \node (e1) [inner sep=0cm] at (-2, 0) {$\circ$};
    \node (e2) [inner sep=0cm] at (2, 0) {$\circ$};
    \node (tip) [inner sep=0cm] at (0,3) {$\circ$};
    \path[draw] (e1) -- node [above] {$e$} (e2);
    \path[draw] (e1) -- node [above, xshift=-.2cm] {$l$} (tip);
    \path[draw] (e2) -- node [above, xshift=.2cm] {$r$} (tip);
    \draw (-1.4,0) arc (0:56:.6cm);
    \node (ael) at (-1,.5) {$\angle el$};
  \end{tikzpicture}$\quad$
  \caption{Lemma~\ref{lem:mono}, step 2b: $p$ lies somewhere in the gray cone}\label{fig:lemma:mono:3}
\end{figure}

\begin{lemma}[Target in Half-space]\label{lem:halfspace}
At every iteration of the walk it holds that $p$ lies in the half-space
to the left of $e$.
\end{lemma}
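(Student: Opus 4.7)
The proof proceeds by induction on the iteration count. The base case follows immediately from line~1 of the algorithm, which flips $e$ via $\ms{inv}$ (if necessary) so that $p$ lies in the closed left half-space of $e$ upon entering the loop.

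For the inductive step, assume $p$ is in the left half-space of the current $e$ and $p \notin f = \ms{face}(e)$. The algorithm moves $e$ to either $l = \ms{inv}(\ms{next}(e))$ or $r = \ms{inv}(\ms{prev}(e))$; by the left/right symmetry of the non-deterministic version, it suffices to treat one of the two choices, so I focus on the case illustrated by the figures in which the algorithm chooses $r$ (so $\od{r}{p} \le \od{l}{p}$), and show that $p$ then lies in the left half-space of the new $e = r$.

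I would argue the contrapositive. Suppose $p$ sits on the $f$-side of line $r$. Combined with the inductive hypothesis and with $p \notin f$, $p$ is pinned into an open wedge ``beyond segment $l$'', bounded by segment $l$ itself, the extension of line $e$ past its endpoint on the $l$-side, and the extension of line $r$ past the tip. Figures~\ref{fig:lemma:1} and~\ref{fig:lemma:2} depict precisely these successive half-plane intersections and the subtraction of $f$. Inside this wedge I would prove $\od{l}{p} < \od{r}{p}$, which contradicts the algorithm's choice of $r$ and thereby forces $p$ to lie in the left half-space of $r$.

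The strict inequality inside the wedge splits into two sub-cases, governed by the perpendicular to line $l$ at the tip. When the orthogonal foot of $p$ on line $l$ lies on segment $l$, $d_l$ equals the perpendicular distance from $p$ to line $l$, which is strictly smaller than the corresponding distance to line $r$ (since $p$ sits between the two lines but closer to $l$); thus $d_l < d_r$ and the ordering on oriented distances follows directly. When the foot overshoots the tip --- the narrow sub-region shown in Figure~\ref{fig:lemma:3} --- the closest point on both $l$ and $r$ is the tip itself, so $d_l = d_r$, and the comparison falls back to the angles; using the obtuse-angle convention at a segment endpoint together with the hypothesis that $p$ is strictly on the $f$-side of line $r$, one derives $\alpha_l < \alpha_r$, with the governing angle being $\angle lr$ at the tip. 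This angle-tiebreak sub-case is the main technical obstacle, as its strict inequality rests delicately on the sign conventions in the definition of $\alpha$ when the closest point is an endpoint rather than an orthogonal projection.
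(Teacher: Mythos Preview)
Your overall architecture matches the paper's: induction, with the base case from line~1 and the inductive step by contradiction, pinning $p$ into the truncated cone of Figures~\ref{fig:lemma:1}--\ref{fig:lemma:3} and then comparing the two oriented distances. Two remarks, one cosmetic and one substantive.

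First, the figures do \emph{not} depict the $r$-chosen case. The cone in Figure~\ref{fig:lemma:1} has its apex at $e_1$, i.e.\ it is $\{\text{left of }e\}\cap\{\text{right of }l\}$, which is precisely the configuration arising when $l$ is chosen and one assumes $p$ lies to the right of~$l$. Your wedge ``beyond segment $l$'' is the mirror image of what is drawn. This is harmless for correctness but worth getting straight.

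Second, your justification for sub-case~1 has a real gap. You argue that when the perpendicular foot of $p$ on line $l$ lies on the segment, then $d_l$ equals the perpendicular distance to line $l$, ``which is strictly smaller than the corresponding distance to line $r$ since $p$ sits between the two lines but closer to $l$.'' This is false in general: take the paper's coordinates $e_1=(-2,0)$, $e_2=(2,0)$, tip $=(0,3)$ and $p=(-2,4)$. Then $p$ is in your wedge, the foot on line $l$ lands at parameter $12/13\in(0,1)$ so you are in sub-case~1, yet the perpendicular distance to line $r$ is $4/\sqrt{13}$ versus $8/\sqrt{13}$ to line $l$ --- the wrong way round. The conclusion $d_l<d_r$ still holds (here $d_l=8/\sqrt{13}<\sqrt{5}=d_r$), but not for the reason you give. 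The paper's argument avoids this: from any point in the truncated cone, the shortest segment to the far side must cross the near side, giving the distance inequality in one stroke without a case split on foot position; equality is then possible only when both closest points coincide at the tip, and there the angle identity $\alpha_{\text{far}}=\alpha_{\text{near}}+\angle lr$ finishes the contradiction (Figure~\ref{fig:lemma:3}). Replacing your ``between the two lines'' sentence with this separation argument closes the gap and in fact subsumes your two sub-cases into one.
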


\begin{proof}
By induction on the number of iterations. The basis is ensured as a
postcondition of line 1. For the inductive step assume, w.l.o.g., that
the left successor edge $l$ is chosen in line 4, which implies, in the
non-deterministic version, $\od{l}{p} \le \od{r}{p}$.  Now, for
contradiction, assume that $p$ lies strictly to the right of $l$. By
inductive hypothesis we have that $p$ is to the left of $e$ which
places it in the cone formed by $l$ and $e$
(cf. Figure~\ref{fig:lemma:1}). In addition we have the loop invariant
$p \notin \ms{face}(e)$ which places $p$ further inside the now
truncated cone formed by $l$, $e$ and $r$
(cf. Figure~\ref{fig:lemma:2}). Now let $[d_l, \alpha_l] = \od{l}{p}$
and $[d_r, \alpha_r] = \od{r}{p}$, and note that a ray cast from any
point in the truncated cone to $l$ must pass through $r$ hence we
obtain: $d_l \ge d_r$. Now, by assumption, $\od{l}{p} \le \od{r}{p}$
which implies $d_l \le d_r$. From the latter two facts it follows that
$d_l = d_r$ and $\alpha_l \le \alpha_r$ which places $p$ still further
inside the cone formed by $l$ and the line orthogonal to $r$ starting
at the tip vertex (cf. Figure~\ref{fig:lemma:3}). However, we see from
the diagram that this implies $\alpha_l = \alpha_r + \angle lr$ which
directly contradicts our assumption that $\alpha_l \le \alpha_r$.
\end{proof}

\begin{lemma}[Monotonicity]\label{lem:mono}
  With every step of the zig-zagging walk the oriented distance of the
  current edge to the target strictly decreases.
\end{lemma}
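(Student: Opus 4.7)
The plan is to establish the strict decrease by a case analysis on where the target $p$ sits relative to the current edge $e$, using Lemma~\ref{lem:halfspace} and the loop condition $p\notin\ms{face}(e)$ to confine $p$ to the gray region of Figure~\ref{fig:lemma:mono:1}, together with the algorithm's greedy choice between $l$ and $r$. By the left--right symmetry of the walk it suffices to treat the case in which the chosen successor $l$ shares endpoint $e_1=l_1$ with $e$, so that $\od{l}{p}\le\od{r}{p}$ and in particular $d_l\le d_r$. The goal is then to show $\od{l}{p}<\od{e}{p}$ throughout the gray region.

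First, suppose the closest point of $p$ on $e$ is strictly interior to $e$, so $p$ lies in the vertical strip over $e$ depicted in Figure~\ref{fig:lemma:mono:2}. Then $\alpha_e=\pi/2$ and $d_e$ is the perpendicular distance from $p$ to $e$. Because $p$ sits above $\ms{face}(e)$ while its orthogonal foot lies inside the base of that triangle, the perpendicular segment from $p$ down to $e$ must first traverse the interior of $\ms{face}(e)$ and therefore cross one of the slanted sides $l$ or $r$; that crossing point witnesses $\min(d_l,d_r)<d_e$. Combined with $d_l\le d_r$ this yields $d_l<d_e$ and hence $\od{l}{p}<\od{e}{p}$ already from the distance coordinate.

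Next, suppose the closest point of $p$ on $e$ is the shared vertex $e_1=l_1$. Since $l$ starts there, $d_l\le|p-e_1|=d_e$ automatically. If the closest point of $p$ on $l$ is anywhere other than $l_1$ (i.e.\ interior to $l$ or at $l_2$), uniqueness of the nearest point on a segment makes the inequality strict, $d_l<d_e$, and we are done. Otherwise the closest point on $l$ is also $l_1$, which is exactly the condition placing $p$ in the narrow cone of Figure~\ref{fig:lemma:mono:3}, bounded by the leftward extension of $e$ and by the perpendicular to $l$ at $e_1$. In this sub-case $d_l=d_e$, but $\alpha_e$ and $\alpha_l$ are both measured at $e_1$ along the common ray from $e_1$ toward $p$, so they differ by exactly the interior angle $\angle el$ of the triangle at $e_1$; thus $\alpha_l=\alpha_e-\angle el<\alpha_e$, and the lexicographic order on $[d,\alpha]$ gives $\od{l}{p}<\od{e}{p}$.

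Finally, if the closest point of $p$ on $e$ is the other vertex $e_2=r_2$, the previous paragraph applied with $l,e_1$ replaced by $r,e_2$ yields $\od{r}{p}<\od{e}{p}$, and combining with $\od{l}{p}\le\od{r}{p}$ delivers $\od{l}{p}<\od{e}{p}$. The main obstacle is the narrow-cone sub-case: it is the unique configuration in which Euclidean distance stagnates, $d_l=d_e$, so strict progress must be extracted from the angular component of $[d,\alpha]$. This is precisely what motivates the lexicographic definition of oriented distance in the first place; every other case collapses to a strict perpendicular-distance decrease or is handled by transitivity through the non-chosen edge.
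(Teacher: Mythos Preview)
Your proof is correct and follows essentially the same route as the paper: the same restriction of $p$ to the region of Figure~\ref{fig:lemma:mono:1}, the same case split on whether the nearest point $e_p$ is interior to $e$ or an endpoint, the same ``perpendicular must cross $l$ or $r$'' observation in the interior case, and the same shared-vertex angle computation $\alpha_l=\alpha_e-\angle el$ in the endpoint case. The only cosmetic difference is framing: the paper argues by contradiction (assuming $\od{e}{p}\le\od{l}{p}$ and $\od{e}{p}\le\od{r}{p}$) and handles the $e_p=e_2$ case by a bare ``w.l.o.g.'', whereas you fix the chosen successor up front, argue directly, and discharge the $e_p=e_2$ case explicitly via transitivity through $\od{l}{p}\le\od{r}{p}$.
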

\begin{proof}
  Assume for contradiction that the oriented distance does not
  strictly decrease, i.e.: $(\od{e}{p}) \le (\od{l}{p})$ and
  $(\od{e}{p}) \le (\od{r}{p})$.  First note that by
  Lemma~\ref{lem:halfspace} we have that $p$ is to the left of $e$,
  and by the loop invariant we have that $p$ is outside of the current
  face (cf. Figure~\ref{fig:lemma:mono:1}). Now let $e_p$ be the point
  on $e$ closest to $p$. We make a case distinction. As a first case
  consider $e_p \ne e_1$ and $e_p \ne e_2$, i.e.: $e_p$ is properly
  contained between $e_1$ and $e_2$. In this case the projection of
  $p$ on $e$ must have been orthogonal
  (cf. Figure~\ref{fig:lemma:mono:2}). However this means that the ray
  from $p$ to $e$ must have passed through one of the two sides $r$ or
  $l$ and through the interior of the face which contradicts our
  assumption that $(\od{e}{p}) \le (\od{l}{p})$ and $(\od{e}{p}) \le
  (\od{r}{p})$. For the second case consider $e_p = e_1$ or $e_p =
  e_2$. So, w.l.o.g, assume $e_p = e_1$.  Now let $[d_l, \alpha_l] =
  \od{l}{p}$ and $[d_e, \alpha_e] = \od{e}{p}$. Since $e_1 = l_1$ is a
  shared point with $l$ it must hold $d_l \le d_e$, and hence, by
  assumption that $(\od{e}{p}) \le (\od{l}{p})$, it must follow that
  $d_l = e_l$ and $\alpha_e \le \alpha_l$, this places $p$ in the cone
  between $e$ and the line orthogonal to $l$ emanating from $e_1$
  (cf. Figure~\ref{fig:lemma:mono:3}). However, we see from the
  diagram that this implies $\alpha_e = \alpha_l + \angle le$, which
  directly contradicts our assumption that $\alpha_e \le \alpha_l$.
\end{proof}

We are now in position to formulate and prove finite time termination
theorem. For this we need local finiteness of $T$ to rule out the
possibility of asymptotic walks in dense tessellations.

\begin{theorem}
If $T$ is locally finite, the zig-zagging walk always terminates.
\end{theorem}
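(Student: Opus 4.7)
The plan is to reduce termination to a straightforward pigeonhole argument combining Lemma~\ref{lem:mono} with the local finiteness hypothesis. Write $[d_0, \alpha_0]$ for the oriented distance $\od{e_\ms{init}}{p}$ of the initial half-edge (after the bootstrap flip on line~1). By Lemma~\ref{lem:mono}, every half-edge $e$ subsequently visited by the walk satisfies $\od{e}{p} < [d_0, \alpha_0]$ in the lexicographic order; unpacking the definition of $<$ on oriented distances, this in particular forces the Euclidean component of $\od{e}{p}$ to be at most $d_0$.

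A half-edge whose Euclidean distance to $p$ is at most $d_0$ is, by definition of the distance component, one whose underlying segment $e_1;e_2$ meets the closed disk of radius $d_0$ centred at $p$. Applying the local finiteness assumption to this disk yields a finite set of edges intersecting it, and hence only finitely many half-edges can arise during the walk. Lemma~\ref{lem:mono} additionally rules out revisiting any half-edge (its oriented distance cannot strictly decrease below itself), so the walk is confined to a finite set of half-edges without repetition and must halt after finitely many iterations.

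I do not anticipate a genuine obstacle; the only thing to double-check, rather than prove, is that the notion of local finiteness formalised in Section~2 --- phrased in terms of edges intersecting a circle's ``circumference or interior'' --- covers the closed disk of radius $d_0$ that I need here. This is immediate from the definition. Everything else is a direct consequence of the two preceding lemmas: Lemma~\ref{lem:halfspace} is used only implicitly, in that it is what keeps Lemma~\ref{lem:mono} applicable at every iteration, while Lemma~\ref{lem:mono} does the real work by both bounding the search region and preventing cycles.
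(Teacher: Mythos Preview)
Your argument is correct and mirrors the paper's own proof almost exactly: both confine the walk to the finite set of half-edges whose Euclidean distance to $p$ is at most the initial value $d_0$ (using local finiteness of the disk), and then invoke Lemma~\ref{lem:mono} to get a strictly decreasing, hence non-repeating, sequence. Your explicit remark that Lemma~\ref{lem:halfspace} enters only indirectly through Lemma~\ref{lem:mono}, and your check that the local finiteness definition covers closed disks, are accurate and add nothing beyond what the paper implicitly assumes.
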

\begin{proof}
  Let $p \in \mb{R}^2$ be the target point and $e \in E$ the initial
  half-edge, let $[d, \alpha] = \od{e}{p}$, and let $E' = \{ e' \in
  E\ |\ (\od{e'}{p}) \le (\od{e}{p}) \}$. We say $E'$ is the
  neighborhood of $e$ and $p$. Note that all edges in the neighborhood
  $E'$ intersect a disc centered at $p$ with radius $d$, hence, by
  local finiteness of $T$, it follows $E'$ is a finite set.  Now by
  Lemma~\ref{lem:mono} we have that the zig-zagging walk visits a
  sequence of half-edges in $E'$ that exhibits a strictly
  monotonically decreasing chain of oriented distances, and, since
  $E'$ is finite, this implies termination.
\end{proof}

If we define the local size of the mesh as the size of the
neighborhood as defined in the proof, we actually get the stronger
result that the zig-zagging walk always terminates in a number of
steps that is bounded by the local size of the mesh.

\subsection{Spiraling in a Tetrahedralization}

One of the obvious next questions to consider is whether this result
can be generalized from two to three dimensions. Although we defer formal
treatment to future work, we already give some intuition and a proof
sketch as to how this might be done.

\begin{figure}\center  
  \begin{tikzpicture}[scale=1]
    \node (p) at (0, 0) [inner sep=0cm] {$\circ$};
    \node (a) at (2.5, 0) [inner sep=0cm] {$\circ$};
    \node (b) at (4.5, -1.5) [inner sep=0cm] {$\circ$};
    \node (c) at (3, -2) [inner sep=0cm] {$\circ$};
    \draw (p) circle (2.5cm);
    \path [draw,dashed,<->] (p) -- (a);
    \path [draw] (a) -- (b) -- (c) -- (a);
    \draw [->] (3.5,-.75) arc (-36.39:-75.97:1.25cm);
  \end{tikzpicture}
\caption{A example of a reorienting step in two
  dimensions.}\label{fig:2da}
  \vspace{1cm}
  \begin{tikzpicture}[scale=1]
    \node (p) at (0, 0) [inner sep=0cm] {$\circ$};
    \node (a) at (2.5, 0) [inner sep=0cm] {$\circ$};
    \node (c) at (3, -2) [inner sep=0cm] {$\circ$};
    \node (b) at (1,-1) [inner sep=0cm] {$\circ$};
    \draw (a) arc (0:322:2.5cm);
    \draw (p) circle (1.4142cm);
    \path [draw,dashed,<->] (p) -- (b);
    \path [draw] (a) -- (c) -- (b) -- (a);
    \draw [->] (0,2.5) -- (0,1.4142);
    \draw [->] (-2.5,0) -- (-1.4142,0);
    \draw [->] (2.5,0) -- (1.4142,0);
    \draw [->] (0,-2.5) -- (0,-1.4142);
  \end{tikzpicture}
\caption{An example of an approaching step in two
  dimensions.}\label{fig:2dd}
\end{figure}

First observe that, in two dimensions, we can define the ``current
neighborhood circle'' to be the smallest circle surrounding the
target point that intersects the current half-edge. A zig-zagging walk
then consists of steps that change the angle of the current half-edge
to make it more tangent to the current neighborhood circle
(cf. Figure~\ref{fig:2da}) eventually followed by steps that approach
the target and thereby shrink the radius of the current neighborhood
circle (cf. Figure~\ref{fig:2dd}).

In three dimensions the analogon to the neighborhood circle would be the
smallest \emph{sphere} surrounding the target point that intersects
the face. A spiraling walk through a tetrahedralization would then
consist, analogously, of steps that change the orientation of the
current face to make it more tangent to the current neighborhood
sphere (cf. Figure~\ref{fig:3da}) eventually followed by steps that
approach the target and thereby shrink the radius of the current
neighborhood sphere (cf. Figure~\ref{fig:3dd}).

\begin{figure}\center
  \begin{tikzpicture}[scale=1]
    \node (p) at (0, 0) [inner sep=0cm] {$\circ$};
    \node (a) at (2.5, 0) [inner sep=0cm] {$\circ$};
    \node (b) at (4.5, -1.5) [inner sep=0cm] {$\circ$};
    \node (c) at (3, -2) [inner sep=0cm] {$\circ$};
    \node (d) at (3.5,-.25) [inner sep=0cm] {$\circ$};
    \draw (p) circle (2.5cm);
    \draw (a) arc (0:-180:2.5cm and 1cm);
    \draw [dotted] (a) arc (0:180:2.5cm and 1cm);
    \path [draw,dashed,<->] (p) -- (a);
    \path [draw] (a) -- (b) -- (c) -- (a);
    \path [draw] (a) -- (d) -- (b);
    \path [draw,dotted] (d) -- (c);
    \draw [->] (3.5,-.75) arc (-36.39:-75.97:1.25cm);
  \end{tikzpicture}
\caption{A example of a reorienting step in three
  dimensions.}\label{fig:3da}
  \vspace{1cm}
  \begin{tikzpicture}[scale=1]
    \node (p) at (0, 0) [inner sep=0cm] {$\circ$};
    \node (a) at (2.5, 0) [inner sep=0cm] {$\circ$};
    \node (c) at (3, -2) [inner sep=0cm] {$\circ$};
    \node (b) at (1,-1) [inner sep=0cm] {$\circ$};
    \node (d) at (3.5,-.25) [inner sep=0cm] {$\circ$};
    \draw (a) arc (0:322:2.5cm);
    \draw (p) circle (1.4142cm);
    \draw (1.4142,0) arc (0:-180:1.4142cm and .6cm);
    \draw [dotted] (1.4142,0) arc (0:180:1.4142cm and .6cm);
    \path [draw,dashed,<->] (p) -- (b);
    \path [draw] (a) -- (c) -- (b) -- (a);
    \path [draw] (a) -- (d) -- (c);
    \path [draw,dotted] (d) -- (b);
    \draw [->] (0,2.5) -- (0,1.4142);
    \draw [->] (-2.5,0) -- (-1.4142,0);
    \draw [->] (2.5,0) -- (1.4142,0);
    \draw [->] (0,-2.5) -- (0,-1.4142);
  \end{tikzpicture}
\caption{An example of an approaching step in three
  dimensions.}\label{fig:3dd}
\end{figure}

In the two dimensional case a single angle was sufficient to
characterize the orientation of the current half-edge w.r.t. the target
point. In the three dimensional case we need at least 2 angles to
characterize the orientation of the current face w.r.t. the target
point.

In particular we define the \emph{ray} as the shortest line segment
from the target point to the closest point on the face, we must then
fix a roll \emph{roll axis} as a line-segment that lies in the face
and intersects the ray.

Given such a roll axis we then define the \emph{pitch axis} as the
line that is orthogonal to both the ray as well as the roll axis.  We
further define the \emph{pitch angle} of the face as the angle between
the ray and the roll axis and we define the \emph{roll angle} as the
angle between the face and the plane defined by the pitch and roll
axis.

We then define a \emph{pitch minimizing roll axis} to be any roll axis
that minimizes the pitch angle, and we define a \emph{minimizing roll
  axis} to be any pitch minimizing roll axis that minimizes the roll
angle, i.e.: it minimizes \emph{firstly} the pitch angle and
\emph{secondly} the roll angle. We refer to the latter as the
\emph{minimal pitch angle} and the \emph{minimal roll angle}.

For some face $f$ and some point $p$ where $f$ does not contain $p$,
we can now define the oriented distance of $f$ to $p$, notation:
$\od{f}{p}$, as the triple $[d, \alpha, \beta]$ where $d$ is the
Euclidian distance of the target point to the closest point on the
face, $\alpha$ is the minimal pitch angle and $\beta$ is the minimal
roll angle. We then define a strict partial order as follows:
\begin{align*}
  [d, \alpha, \beta] < [d', \alpha', \beta']\text{ iff }
  d < d' \lor (d = d' \land \alpha < \alpha') \lor (d = d' \land \alpha = \alpha' \land \beta < \beta')
\end{align*}

\section{Discussion and Future Work}
We defer to future work the formal treatment of the three dimensional
case (for which we already gave some intuitions in the previous
section), and, possibly, a full generalization to $n$--dimensional
simplexes.

Another possible direction that we are interested in is to look for a
robust version of the zig-zagging walk that does not require infinite
precision in the evaluation of the predicates.
The hope there is that the clear termination proof in the infinite
precision case can function as a starting point in proving termination
in the finite precision case.

\bibliography{main}{} \bibliographystyle{plain}

\end{document}